\providecommand{\tabularnewline}{\\}
\theoremstyle{remark}
\newtheorem{thm}{Theorem}
\newenvironment{proof}[1][\protect\proofname]{\par
\normalfont\topsep6\p@\@plus6\p@\relax
\trivlist
\itemindent\parindent
\item[\hskip\labelsep
\scshape
#1]\ignorespaces
}{%
\endtrivlist\@endpefalse
}
\providecommand{\proofname}{Proof}
\theoremstyle{remark}
\newtheorem{lem}{Lemma}
\theoremstyle{remark}
\begin{document}

\title{On the Blind Interference Alignment over Homogeneous Block Fading Channels}

\author{Qing F.~Zhou and Q. T. Zhang, {\em Fellow, IEEE}
%
%
\thanks{Qing F.~Zhou and Q. T. Zhang are with the Department of Electronic Engineering,
City University of Hong Kong, Tat Chee Avenue,
Kowloon, Hong Kong (Email: qinzhou@cityu.edu.hk, eekzhang@cityu.edu.hk) }
}
\maketitle

\begin{abstract}
Staggered fading pattern between different users is crucial to  interference alignment without CSIT, or so-called blind interference alignment (BIA). This special fading structure naturally arises from heterogeneous block fading setting, in which different users experience independent block fading with different coherent times. Jafar \emph{et al.} prove that BIA can be applied in some special heterogeneous block fading channels, which are formed naturally or constructed artificially. In this paper, we show that in the context of a $2$-user $2\times 1$ broadcasting (BC) channel, staggered fading pattern can also be found in homogeneous block fading setting, in which both users experience independent fading with the same coherent time; and we propose a scheme to achieve the optimal $\frac{4}{3}$ DoF for the homogenous setting by using BIA. Applying the proposed scheme, we further study a $2\times 1$ BC network with $K$ users undergoing homogeneous block fading. When $K\geq 4$, we show it is almost guaranteed that the transmitter can find two users among the $K$ users to form a $2$-user $2\times 1$ BC channel which can apply BIA.
\end{abstract}

\section{Introduction}
As a new paradigm of mitigating interference, Interference Alignment (IA) has attracted increasing attentions \cite{Cadambe2008,Choi2009,Jung2011}. In IA, the transmitters utilize the channel state knowledge to precode their signaling vectors. The aim of this precoding is that at each receiver,  interfering signals are purposely aligned in a minimized signal subspace, while the desired signals independent of interference signals are delivered in a maximized signal subspace. To implement the IA technique in practical systems, the major challenge is the need of the perfect, and sometimes global, channel state information at the transmitters (CSIT).

It is recently shown \cite{Jafar2009,Jafar2012} that the restricted condition of perfect CSIT can be relaxed if the coherent blocks of the receivers display certain heterogeneous patterns. Consider the example of a $k$-user $M\times 1$ multiple-input single-output (MISO) broadcasting (BC) channel, in which one base station with $M=2$ transmit antennas broadcasts information data to $k=2$ single-antenna users. If the coherent time $N_1$ experienced by one user is a multiple of the coherent time $N_2$ experienced by the other user, forming a specially staggered pattern, the outer bound value of $\frac{4}{3}$ degrees of freedom (DoF) is attainable by interference alignment with no CSIT. The interference alignment without CSIT is also referred to as \emph{blind interference alignment} (BIA). In order to obtain the staggered channel pattern which is amiable to BIA, Jafar \emph{et al.} \cite{Gou2011} propose to mount a configurable multi-mode antenna at each user, and artificially manipulate the channel state to construct a similar heterogeneous block fading channel by switching the antenna mode at some/all users.

In this paper, instead of applying complex receivers or presuming the heterogeneous block fading channel model, we study a $2\times 1$ BC channel with $K\geq 2$ users which follow the same temporal correlated fading model and thus have the same coherent time. To distinguish the heterogeneous block fading channel where $N_1\neq N_2$, we refer to this channel as  \emph{homogeneous} block fading model. For the homogeneous BC channel with $K=2$, we propose a scheme to achieve the optimal $\frac{4}{3}$ DoF by using BIA. Our analysis further shows that in this homogeneous BC channel, if $K \geq 4$, the chance is larger than $95\%$ that the base station find two mobile users to form a 2-user $2\times 1$ MISO BC channel, and the formed channel can achieve the optimal $\frac{4}{3}$ DoF by using BIA. When $K\geq 6$, the chance is almost one.

\section{System model}

Consider a $2$-user $2\times 1$ MISO BC channel which consists of a two-antenna transmitter and two single-antenna receivers. Referring to Fig.~\ref{fig:system.model}, let $h_{ij}$, $i,j\in\{1,2\}$, denote the channel coefficient linking the transmit antenna $\mathbb{S}_{i}$
to the receiver $\mathbb{R}_{j}$. Further define the coefficient vector $H_{j}(n)=[
h_{1j}(n), h_{2j}(n)]^{T}$ for $\mathbb{R}_{j}$, where $n\geq 0$ is the discrete-time index. Represent the coherent time for $H_{j}(n)$ by $N_{j}$, and the initial
time offset by $n_{\delta,j}$.  In this paper, we consider the BC channel with $N_1 = N_2 = N$, referred to as homogeneous block fading. Without loss of generality, let $n_{\delta,1}=0$, and $0\leq n_{\delta,2}<N$.
So, as illustrated in Fig.~\ref{fig:system.model}, the constant coefficient during a coherent block gives $H_{1}(aN+b_{1})=H'_1(a)$ for all $a \geq 0$ and $0\leq b_{1}<N$; $H_{2}(b_{2})=H'_2(0)$ for $0\leq b_{2}\leq n_{\delta,2}-1$, and $H_{2}(aN+n_{\delta,2}+b_{2})=H'_2(a+1)$
for all $a \geq 0$ and $0\leq b_{2}<N$.

It is known that the optimal achievable DoF, also known as multiplexing gain, is $\frac{4}{3}$ for $2$-user $2\times 1$ MISO BC channels \cite{Gou2011}. The interference alignment approach achieving the optimal DoF by using symbol extension is demonstrated in Fig. \ref{fig:IA.X.chnl}.
In this figure, given three arbitrary time slots $n_{1}< n_{2}< n_{3}$ --- they are not necessary to be consecutive ---
as well as the corresponding channel coefficients
\begin{center}
\begin{tabular}{|c|c|c|}
\hline
$H_{1}(n_{1})$ & $H_{1}(n_{2})$ & $H_{1}(n_{3})$\tabularnewline
\hline
$H_{2}(n_{1})$ & $H_{2}(n_{2})$ & $H_{2}(n_{3})$\tabularnewline
\hline
\end{tabular}\quad,
\par\end{center}
we denote channel coefficient matrix from node $\mathbb{S}_{i}$ to
$\mathbb{R}_{j}$ as $H_{ij}=\mathsf{diag}[h_{ij}(n_{1}),h_{ij}(n_{2}),h_{ij}(n_{3})]$.
Suppose $\mathbf{v}_{1}$ and $\mathbf{v}_{2}$ are the signaling
vectors for $\mathbb{S}_{1}$, and $\mathbf{u}_{1}$ and $\mathbf{u}_{2}$
are signaling vectors for $\mathbb{S}_{2}$. At the user $j$, the received signal $\mathbf{y}_j=[y_j(n_1),y_j(n_2),y_j(n_3)]^T$ is
\begin{equation}
\mathbf{y}_j = H_{1j} [ \mathbf{v}_1,\mathbf{v}_2 ] \mathbf{s}_1 + H_{2j}[ \mathbf{u}_1,\mathbf{u}_2 ] \mathbf{s}_2 + \mathbf{z}_j
\end{equation}
where $\mathbf{s}_i = [s_{i1},s_{i2}]^T \in \mathcal{C}^{2\times 1}$ represents two symbol streams from the transmit antenna $i$, $\mathbf{z}_j \in \mathcal{C}^{3\times 1}$ is the AWGN vector at the user $j$.
To achieve the optimal DoF $\tfrac{4}{3}$,
the interference alignment implementation shown in Fig.~\ref{fig:IA.X.chnl} requires
\begin{equation}
\begin{cases}
H_{11}\mathbf{v}_{1}\rightarrow H_{21}\mathbf{u}_{1}\\
H_{12}\mathbf{v}_{2}\rightarrow H_{22}\mathbf{u}_{2}
\end{cases}\label{eq:IA.cond}
\end{equation}
where $\mathbf{x}\rightarrow\mathbf{y}$ means that $\mathbf{x}$ aligns
with $\mathbf{y}$, that is, $\mathbf{x}=a\mathbf{y}$ for a nonzero
scalar $a$.
%
In this implementation, $\mathbb{R}_1$ decodes the symbols delivered by $\mathbf{v}_2$ and $\mathbf{u}_2$, i.e., $s_{12}$ from $\mathbb{S}_1$ and $s_{22}$ from $\mathbb{S}_2$, while $\mathbb{R}_2$ decodes the symbols $s_{11}$ and $s_{21}$. Four symbols are delivered by three channel uses, so the DoF $\frac{4}{3}$ is achieved.
Alignment criteria \eqref{eq:IA.cond} can be rewritten as
\begin{equation}
\begin{cases}
\mathbf{u}_{1} & \rightarrow \mathsf{diag}[\frac{h_{11}(n_{1})}{h_{21}(n_{1})},\frac{h_{11}(n_{2})}{h_{21}(n_{2})},\frac{h_{11}(n_{3})}{h_{21}(n_{3})}]\mathbf{v}_{1}\\
\mathbf{u}_{2} & \rightarrow \mathsf{diag}[\frac{h_{12}(n_{1})}{h_{22}(n_{1})},\frac{h_{12}(n_{2})}{h_{22}(n_{2})},\frac{h_{12}(n_{3})}{h_{22}(n_{3})}]\mathbf{v}_{2}
\end{cases}.\label{eq:IA.signal.relation}
\end{equation}

\begin{center}
\begin{figure}
\begin{centering}
\includegraphics[scale=0.48]{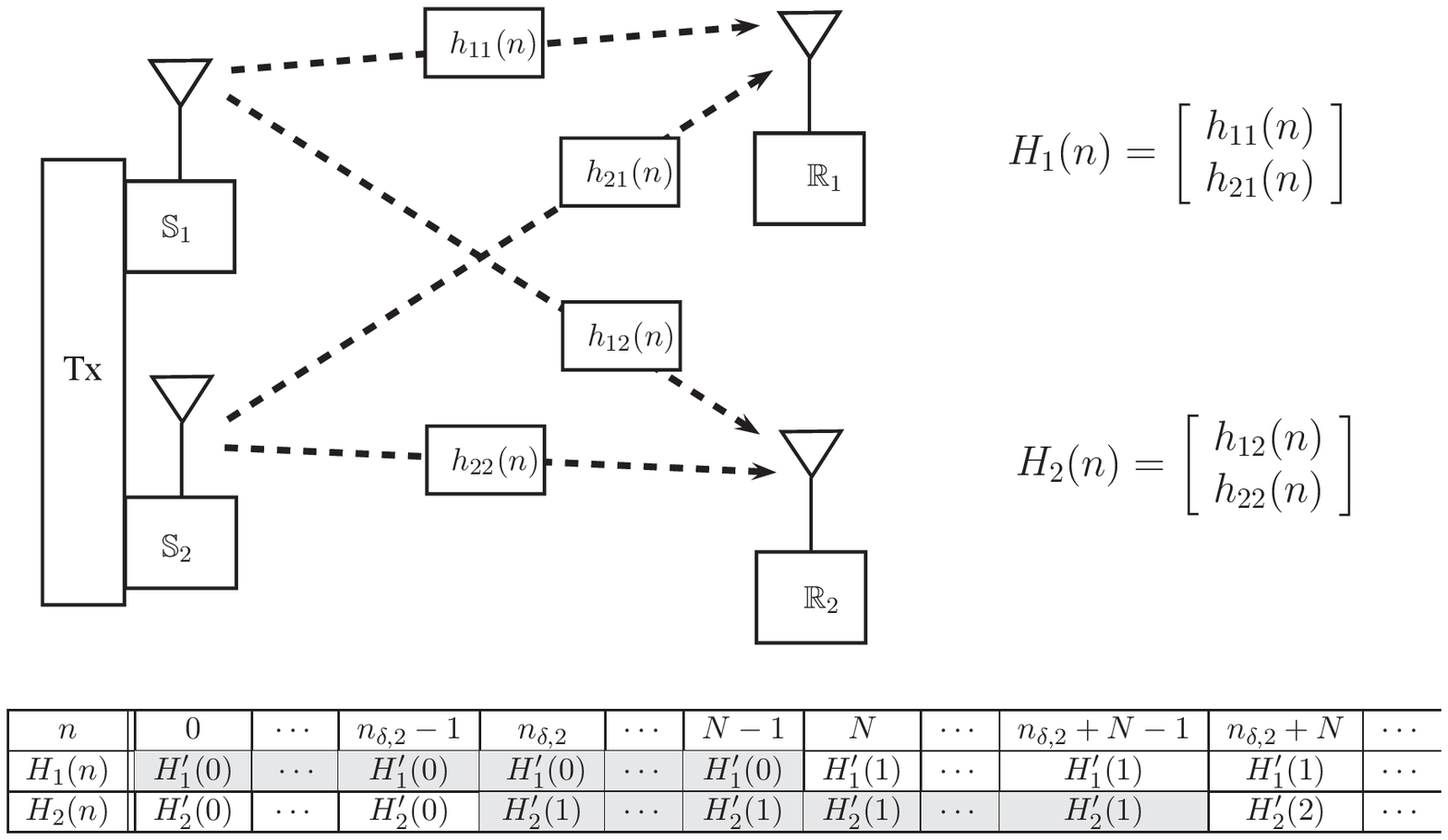}
\par\end{centering}
\caption{System model of a $2$-user $2\times 1$ MISO BC channel}
\label{fig:system.model}
\end{figure}
\par\end{center}

\begin{figure}
\begin{centering}
\includegraphics[scale=0.55]{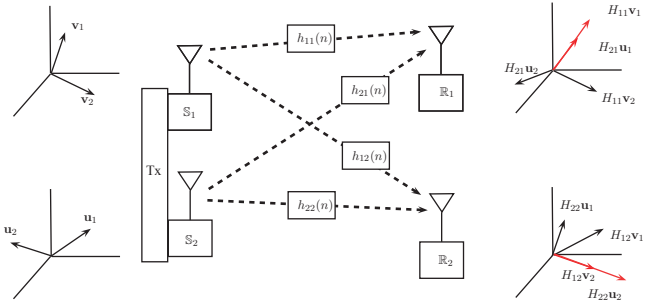}
\par\end{centering}
\caption{Interference alignment of a 2-user $2\times 1$ MISO BC channel. Note that the symbol streams $s_{ij}$ are omitted for simplicity.}
\label{fig:IA.X.chnl}
\end{figure}

\section{Block Patterns for blind interference alignment}

In this section, we investigate how to achieve interference alignment
with no CSIT but the information of channel
coherence time $N$ and the offset $n_{\delta,2}$ at the transmitter. We emphasize here that the limited knowledge of coherent time and relative time offset at the transmitter falls into the category of no CSIT because the transmitter needs no information of the exact channel coefficient values \cite{Jafar2012,Gou2011}.

When the channel coefficients of a channel block consisting of three time slots is given by the staggered pattern
\begin{center}
\begin{tabular}{|c|c|c|}
\hline
\cellcolor[gray]{0.9}$H_{1}'(\alpha)$ & \cellcolor[gray]{0.9}$H_{1}'(\alpha)$ & $H_{1}'(\beta)$\tabularnewline
\hline
$H_{2}'(\gamma)$ & \cellcolor[gray]{0.9}$H_{2}'(\phi)$ & \cellcolor[gray]{0.9}$H_{2}'(\phi)$\tabularnewline
\hline
\end{tabular}\quad,\\
\par\end{center}
we say the block channel has the type-$\overrightarrow{\mathsf{Z}}$
pattern.
Substituting the channel coefficients of the block into \eqref{eq:IA.signal.relation},
we have
\begin{equation}
\begin{cases}
\mathbf{u}_{1} & \rightarrow \mathsf{diag}[\frac{h_{11}'(\alpha)}{h_{21}'(\alpha)},\frac{h_{11}'(\alpha)}{h_{21}'(\alpha)},\frac{h_{11}'(\beta)}{h_{21}'(\beta)}]\mathbf{v}_{1}\\
\mathbf{u}_{2} & \rightarrow \mathsf{diag}[\frac{h_{12}'(\gamma)}{h_{22}'(\gamma)},\frac{h_{12}'(\phi)}{h_{22}'(\phi)},\frac{h_{12}'(\phi)}{h_{22}'(\phi)}]\mathbf{v}_{2}
\end{cases}.\label{eq:IA.blind}
\end{equation}
For this block pattern, if we choose $\mathbf{v}_{1}=[1,1,0]^{T}$ and $\mathbf{v}_{2}=[0,1,1]^{T}$,
then we get $\mathbf{u}_{1}\rightarrow[1,1,0]^{T}$ and $\mathbf{u}_{2}\rightarrow[0,1,1]^{T}$,
which achieves the interference alignment without the need of CSIT, i.e., BIA. Similarly, if a channel block has the pattern
\begin{center}
\begin{tabular}{|c|c|c|}
\hline
$H_{1}'(\alpha)$ & \cellcolor[gray]{0.9}$H_{1}'(\beta)$ & \cellcolor[gray]{0.9}$H_{1}'(\beta)$\tabularnewline
\hline
\cellcolor[gray]{0.9}$H_{2}'(\gamma)$ & \cellcolor[gray]{0.9}$H_{2}'(\gamma)$ & $H_{2}'(\phi)$\tabularnewline
\hline
\end{tabular}\quad,\\
\par\end{center}
we say the block has the type-$\overleftarrow{\mathsf{Z}}$ pattern.
For this block pattern, we can choose $\mathbf{v}_{1}=[0,1,1]^{T}$ and $\mathbf{v}_{2}=[1,1,0]^{T}$,
and then choose $\mathbf{u}_{1}\rightarrow[0,1,1]^{T}$ and $\mathbf{u}_{2}\rightarrow[1,1,0]^{T}$, which also obtains BIA.
Together, we refer to the two block channel patterns above as the type-$\mathsf{Z}$ pattern.

To show how the knowledge of the coherent time and the offset will help
implement BIA, we give a simple demonstration. Consider a BC channel with $N=5$ and $n_{\delta,2}=2$; its channel coefficients over time are plotted in Fig.~\ref{table:1}, in which the same marks represent the same channel state.
\begin{figure*}
\begin{center}
\begin{tabular}{|c||c|c|c|c|c|c|c|c|c|c|c|c|c|c|c|c|c|c|}
\hline
$n$ & 0 & 1 & 2 & 3 & 4 & 5 & 6 & 7 & 8 & 9 & 10 & 11 & 12 & 13 & 14 & 15 & 16 & 17\tabularnewline
\hline
$H_{1}(n)$ & $\Box$ & $\Box$ & $\Box$ & $\Box$ & $\Box$ & $\circ$ & $\circ$ & $\circ$ & $\circ$ & $\circ$ & $\bigtriangleup$ & $\bigtriangleup$ & $\bigtriangleup$ & $\bigtriangleup$ & $\bigtriangleup$ & $\heartsuit$ & $\heartsuit$ & $\heartsuit$\tabularnewline
\hline
$H_{2}(n)$ & $\bigstar$ & $\bigstar$ & $\clubsuit$ & $\clubsuit$ & $\clubsuit$ & $\clubsuit$ & $\clubsuit$ & $\blacklozenge$ & $\blacklozenge$ & $\blacklozenge$ & $\blacklozenge$ & $\blacklozenge$ & $\spadesuit$ & $\spadesuit$ & $\spadesuit$ & $\spadesuit$ & $\spadesuit$ & $\bullet$\tabularnewline
\hline
\end{tabular}
\end{center}
\caption{A homogeneous block fading channel with $N=5$ and $n_{\delta,2}=2$.}
\label{table:1}
\end{figure*}
The channel block from $n=3$ to $n=17$, which contains $3N=15$ consecutive time slots, can be decomposed
into 5 type-$\mathsf{Z}$ patterns as presented in Fig.~\ref{table:2}.
\begin{figure*}
\begin{center}
\begin{tabular}{|c|c|c|}
\hline
3 & 5 & 7\tabularnewline
\hline
$\Box$ & $\circ$ & $\circ$\tabularnewline
\hline
$\clubsuit$ & $\clubsuit$ & $\blacklozenge$\tabularnewline
\hline
\end{tabular} \quad %
\begin{tabular}{|c|c|c|}
\hline
4 & 6 & 8\tabularnewline
\hline
$\Box$ & $\circ$ & $\circ$\tabularnewline
\hline
$\clubsuit$ & $\clubsuit$ & $\blacklozenge$\tabularnewline
\hline
\end{tabular} \quad %
\begin{tabular}{|c|c|c|}
\hline
9 & 11 & 13\tabularnewline
\hline
$\circ$ & $\bigtriangleup$ & $\bigtriangleup$\tabularnewline
\hline
$\blacklozenge$ & $\blacklozenge$ & $\spadesuit$\tabularnewline
\hline
\end{tabular} \quad %
\begin{tabular}{|c|c|c|}
\hline
10 & 12 & 15\tabularnewline
\hline
$\bigtriangleup$ & $\bigtriangleup$ & $\heartsuit$\tabularnewline
\hline
$\blacklozenge$ & $\spadesuit$ & $\spadesuit$\tabularnewline
\hline
\end{tabular} \quad %
\begin{tabular}{|c|c|c|}
\hline
14 & 16 & 17\tabularnewline
\hline
$\bigtriangleup$ & $\heartsuit$ & $\heartsuit$\tabularnewline
\hline
$\spadesuit$ & $\spadesuit$ & $\bullet$\tabularnewline
\hline
\end{tabular}\\
\par\end{center}
\caption{The implementation of BIA for the homogeneous block fading channel shown in Fig.~\ref{table:1}.}
\label{table:2}
\end{figure*}
For instance, the time slots $n_1=3$, $n_2=5$ and $n_3=7$ form a channel block with the type-$\overleftarrow{\mathsf{Z}}$ pattern. Since the channel repeats the same pattern every 15 consecutive symbols afterwards, the same decomposition
can be achieved repetitively. Therefore, this channel can obtain the optimal DoF by using BIA. Note that this decomposition of the block from $n=3$ to $n=17$ is not unique, and the first three time slots are negligible or utilized for control messages.

\section{Blind interference alignment over homogeneous block fading channels}

Define $\tau \triangleq \min\{n_{\delta,2},N-n_{\delta,2}\}$. We first study the impact of $\tau$ on the feasibility of the BIA for a $2$-user $2\times 1$ MISO BC channel.
In such a channel with homogeneous block fading, a sufficient condition on $\tau$ to enable BIA is given by the following theorem.
\begin{thm}
For a 2-user $2\times 1$ MISO BC channel with the homogeneous coherent time $N$, if $\left\lceil N/3\right\rceil \leq\tau\leq\left\lfloor N/2\right\rfloor $, or equivalently $\left\lceil N/3\right\rceil \leq n_{\delta,2} \leq N - \left\lceil N/3\right\rceil $, 
then the channel can be decomposed into blocks of the type-$\mathsf{Z}$
pattern, and thus, the optimal DoF $\frac{4}{3}$ is achievable
by BIA. \end{thm}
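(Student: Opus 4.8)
The plan is to convert the statement into a combinatorial tiling problem on the time axis and then into a tiny system of counting equations whose solvability is controlled \emph{exactly} by the hypothesis $\lceil N/3\rceil\le\tau\le\lfloor N/2\rfloor$. First I would use the obvious symmetry between the offsets $n_{\delta,2}$ and $N-n_{\delta,2}$: reversing the order of the time slots yields a homogeneous block fading channel with the same coherent time $N$ and offset $N-n_{\delta,2}$, and it interchanges the two orientations $\overrightarrow{\mathsf{Z}}\leftrightarrow\overleftarrow{\mathsf{Z}}$ while sending valid decompositions to valid decompositions. Hence I may assume $n_{\delta,2}=\tau=:m$ with $\lceil N/3\rceil\le m\le\lfloor N/2\rfloor$.

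Next I would encode the block structure conveniently. Every slot lies in exactly one of the intervals $S_a:=[aN,\,aN+m)$ (length $m$) or $L_a:=[aN+m,\,(a+1)N)$ (length $N-m$), $a\ge 0$; moreover $S_a\cup L_a$ is the $a$-th $H_1$-block, while $L_{a-1}\cup S_a$ (equivalently $L_a\cup S_{a+1}$) is an $H_2$-block. Substituting these facts into the definitions of the two type-$\mathsf{Z}$ orientations yields the crucial reformulation: a triple $n_1<n_2<n_3$ forms a type-$\overleftarrow{\mathsf{Z}}$ block iff, for some $a$, $n_1\in L_{a-1}$, $n_2\in S_a$, $n_3\in L_a$; and it forms a type-$\overrightarrow{\mathsf{Z}}$ block iff, for some $a$, $n_1\in S_a$, $n_2\in L_a$, $n_3\in S_{a+1}$. (The non-degeneracy required by the patterns is automatic, since the three slots of such a triple always span two consecutive $H_1$-block indices and two consecutive $H_2$-block indices.) Getting this translation right --- keeping the $H_1$- and $H_2$-block boundaries, which are offset by $m$, straight --- is the part I expect to need the most care.

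Then I would look for a $3N$-periodic decomposition, described only through how many triples of each kind are anchored at each index. Let $x_a$ (resp.\ $y_a$) be the number of type-$\overleftarrow{\mathsf{Z}}$ (resp.\ type-$\overrightarrow{\mathsf{Z}}$) triples occupying $(L_{a-1},S_a,L_a)$ (resp.\ $(S_a,L_a,S_{a+1})$). Requiring that every point of each $S_a$ and each $L_a$ be used exactly once amounts to
\begin{equation}
x_a+y_a+y_{a-1}=m,\qquad x_{a+1}+x_a+y_a=N-m\quad\text{for all }a.\label{eq:zcounts}
\end{equation}
Eliminating $x_a=m-y_a-y_{a-1}$ from \eqref{eq:zcounts} gives $y_{a-1}+y_a+y_{a+1}=3m-N$, so $y$ is $3$-periodic and one only needs non-negative integers $y_0,y_1,y_2$ summing to $3m-N$ with the induced $x_a$ non-negative. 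Such a choice exists iff $0\le 3m-N$ and $3m-N\le m$, both guaranteed by $\lceil N/3\rceil\le m\le\lfloor N/2\rfloor$; concretely one may take $y_0=3m-N$, $y_1=y_2=0$, whence $x_0=x_1=N-2m\ge 0$ and $x_2=m$. Once admissible counts are fixed, the decomposition itself is immediate: inside each interval, assign its points bijectively to the (correct number of) triples claiming a point from it.

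Finally, the resulting tiling is $3N$-periodic and covers every slot outside a bounded initial segment (near $n=0$ a triple would reach into a non-existent interval, but discarding finitely many slots does not change the DoF). Each triple is a genuine type-$\mathsf{Z}$ block by the reformulation above, so the BIA scheme of \eqref{eq:IA.blind} delivers four symbols per three channel uses; thus $\tfrac{4}{3}$ DoF is achieved, and it is optimal by the converse of \cite{Gou2011}.
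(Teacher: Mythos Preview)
Your argument is correct and takes a genuinely different route from the paper's. The paper proves the theorem by an explicit, picture-driven construction: over one $3N$-window it exhibits four families of triples (of sizes $\tau$, $N-2\tau$, $3\tau-N$, $N-2\tau$), verifies each family is of type $\mathsf{Z}$, and observes that the sizes add up to $N$. You instead abstract the problem into balance equations on the counts $(x_a,y_a)$, reduce these to the single recurrence $y_{a-1}+y_a+y_{a+1}=3m-N$, and solve it. Your explicit solution $(x_0,x_1,x_2,y_0,y_1,y_2)=(N-2m,\,N-2m,\,m,\,3m-N,\,0,\,0)$ yields exactly the same multiset of block counts as the paper's construction, so the two decompositions coincide up to relabelling; but your derivation explains \emph{why} the hypothesis $\lceil N/3\rceil\le\tau\le\lfloor N/2\rfloor$ is the right one (it is precisely what makes $3m-N$ and $N-2m$ non-negative), whereas the paper simply checks that a construction pulled from a figure happens to work under that hypothesis. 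Your approach is also self-contained, while the paper's leans heavily on a diagram. One small wrinkle: your ``iff'' for solvability is stated a bit loosely (summing the three constraints $y_a+y_{a-1}\le m$ only forces $3m-N\le 3m/2$, not $3m-N\le m$), but since you immediately exhibit an explicit solution under the hypothesis, the proof is unaffected.
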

\begin{proof}
\begin{figure*}
\begin{centering}
\includegraphics[scale=0.65]{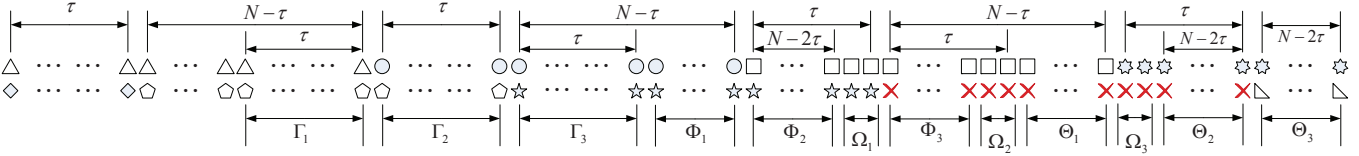}
\caption{Two downlink block fading links with the same coherent time $N$ and a time offset $\left\lceil N/3\right\rceil \leq\tau\leq\left\lfloor N/2\right\rfloor $ }
\label{figure:proof.thm01}
\end{centering}
\end{figure*}
We prove the theorem by showing a constructive algorithm which decomposes each $3N$ consecutive symbols into $N$ type-$\mathsf{Z}$ pattern blocks. A channel consisting of two block fading links with the time offset $\left\lceil N/3\right\rceil \leq\tau\leq\left\lfloor N/2\right\rfloor $ is plotted in Fig.~\ref{figure:proof.thm01}. The link on the top has $n_{\delta,1}=0$, and is denoted by first a coherent block of $N$ equilateral triangles representing unchanged channel states, followed by coherent blocks of $N$ symbols each. The link on the bottom has offset $\tau$, and is characterized by first $\tau$ rhombuses, followed by coherent blocks of $N$ symbols each.

Let 2-tuple $\{H_1(n),H_2(n)\}$ at time $n$ be referred to as a {\it pair}. Denote $\Gamma_1$ as the set of pairs which is formed by the last $\tau$ equilateral triangles on the top and the aligned pentagons on the bottom, $\Gamma_2$ as the set of pairs formed by the first $\tau$ circles on the top and their corresponding pentagons on the bottom, $\Gamma_3$ as the set of pairs formed by the first $\tau$ stars on the bottom and their corresponding circles on the top. Then, taking one pair from each of $\Gamma_i$, $i=1,2,3$, and grouping the selected three pairs will form a type-$\mathsf{Z}$ pattern block
\begin{center}
\begin{tabular}{|c|c|c|}
\hline
$\triangle$ & \Circle & \Circle \tabularnewline
\hline
\pentagon & \pentagon & $\bigstar$\tabularnewline
\hline
\end{tabular}.\\
\par\end{center}
Naturally, by using this construction method the pairs in $\Gamma_1$, $\Gamma_2$ and $\Gamma_3$ can form $\tau$ type-$\mathsf{Z}$ pattern blocks.

Similarly, three sets with $N-2\tau$ pairs each can be collected, i.e., $\Phi_1$, $\Phi_2$ and $\Phi_3$ as shown in Fig.~\ref{figure:proof.thm01}. The same constructive procedure over the three sets can generate $N-2\tau$ type-$\mathsf{Z}$ pattern blocks. Repeating the same approach, we group three sets with $3\tau - N$ pairs each as $\Omega_1$, $\Omega_2$ and $\Omega_3$; three sets with $N-2\tau$ pairs each as $\Theta_1$, $\Theta_2$ and $\Theta_3$. The constructive procedure will generate $3\tau-N$ and $N-2\tau$ type-$\mathsf{Z}$ pattern blocks from $\Omega_i$s and $\Theta_i$s, respectively.

The channel from $\Gamma_1$ to $\Theta_3$ forms a block of $3N$ symbols, and the channel pattern of the $3N$ symbols repeats over time. Every such a $3N$-symbol block can be decomposed into $N$ type-$\mathsf{Z}$ pattern blocks as described previously. Therefore, the channel with $\left\lceil N/3\right\rceil \leq\tau\leq\left\lfloor N/2\right\rfloor $ can be decomposed into type-$\mathsf{Z}$ pattern blocks, and thus achieves the optimal $\frac{4}{3}$ DoF by applying BIA to each type-$\mathsf{Z}$ pattern block.
\end{proof}

According to Theorem 1, the coherent block offset $\left\lceil N/3\right\rceil \leq\tau\leq\left\lfloor N/2\right\rfloor $ between two receivers is sufficient for BIA to achieve the optimal DoF. In practice, the channel coefficients within a coherent block are not exactly the same, but only highly correlated \cite{Rappaport2002}. There is no unique way to divide a block fading channel into coherent blocks over time. If the two users experience the same temporal correlated block fading, we can always divide the link of the second user into coherent blocks by selecting a $n_{\delta,2}$ satisfying $\left\lceil N/3\right\rceil \leq\tau\leq\left\lfloor N/2\right\rfloor $. Therefore, in practical systems a $2$-user $2\times 1$ MISO BC channel with homogeneous block fading can obtain the optimal $\frac{4}{3}$ DoF by BIA. Note that each true channel coefficient in a coherent block can be decoupled into the constant channel efficient characterizing the block plus a distortion. How the distortion affects the achieved DoF is interesting but beyond the scope of this paper.

Physically, it is justified to model $n_{\delta,2}$ as a uniformly distributed random variable over $0\leq n_{\delta,2}\leq N-1$. Then the chance for such a 2-user channel to achieve $\frac{4}{3}$ DoF by BIA is $1/3$. Now,
consider a $2\times 1$ BC network with $K\geq3$ mobile users experiencing homogeneous block fading. Suppose
the time offset of the link for $\mathbb{R}_{j}$ ($2\leq j\leq K$)
is uniformly distributed over $0\leq n_{\delta,j}\leq N-1$, with respect to $n_{\delta,1}=0$. Now, we examine \emph{how big the chance is
that the $2$-antenna transmitter can find two users among the $K$ users to form
a 2-user MISO BC channel which is able to achieve the optimal DoF $\frac{4}{3}$
by using BIA}. We first
present a lemma.
\begin{lem}\label{lm1}
Suppose there are $N$ boxes forming a ring. Cast a ball into one box and denote the box as the $0$th box, then cast other $K-1$ balls randomly into the $N$ boxes. Define the distance of any
two balls as the smallest number of boxes separating them along the ring. Let $f(N, K)$
be the number of the events in which the distance of any pair
of balls is less than $\lceil\frac{N}{3}\rceil$. Then
\begin{equation}\label{eq:f.N.K}
f(N,K)=3\Theta\left(\lceil \frac{N}{3} \rceil ,K-2\right)-2\Theta\left( \lceil \frac{N}{3} \rceil ,K-3\right)
\end{equation}
with $\Theta(a,b)=\sum_{i=1}^{a}i^{b}$. \end{lem}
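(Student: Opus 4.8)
My plan is to convert the metric hypothesis into a confinement statement, then stratify the favourable configurations by how widely they spread and count each stratum.

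The first step is the geometric core, and the only place where the precise value $m:=\lceil N/3\rceil$ is used. I would prove that \emph{every pair of balls is at distance less than $m$ if and only if all $K$ balls lie inside one block of $m$ consecutive boxes}. One direction is immediate, since $m$ consecutive boxes have pairwise ring distance at most $m-1<N/2$. For the converse, note that the $0$th box is occupied, so every ball lies within distance $m-1$ of it; hence all balls lie in the arc $B$ of $2m-1$ boxes centred at box $0$. Let $-\ell$ and $r$ (with $0\le\ell,r\le m-1$) be the offsets from box $0$ of the two extreme occupied boxes of $B$. If $\ell+r\ge m$, then, since $\ell+r\le 2m-2$ and $N\ge 3m-2$ (the latter because $m=\lceil N/3\rceil$), we get $N-(\ell+r)\ge m$, so those two balls would be at ring distance $\min(\ell+r,\,N-\ell-r)\ge m$, a contradiction. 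Hence $\ell+r\le m-1$, and all $K$ balls occupy the block $\{-\ell,\dots,r\}$, whose width $\ell+r+1$ is at most $m$.

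Next I would introduce the \emph{span} $s$ of a configuration counted by $f(N,K)$: the width of the (unique) smallest block of consecutive boxes containing all $K$ balls. By the previous step $1\le s\le m$, this block contains box $0$, and both of its extreme boxes are occupied, so $f(N,K)=\sum_{s=1}^{m}N_s$, where $N_s$ is the number of configurations of span exactly $s$. Computing $N_s$ is the step I expect to demand the most care. Fixing the span-$s$ block, I would split according to the location of box $0$ inside it: (i) if box $0$ is the left extreme, then that extreme is already occupied, one of the other $K-1$ balls must lie in the right-extreme box, and the remaining $K-2$ balls are placed arbitrarily among the $s$ boxes, giving $s^{K-2}$; (ii) if box $0$ is the right extreme, symmetrically $s^{K-2}$; (iii) if box $0$ is one of the $s-2$ interior boxes, two of the other balls must occupy the two extreme boxes and the remaining $K-3$ are free, giving $(s-2)s^{K-3}$. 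Summing, $N_s=2s^{K-2}+(s-2)s^{K-3}=3s^{K-2}-2s^{K-3}$, whence $f(N,K)=3\sum_{s=1}^{m}s^{K-2}-2\sum_{s=1}^{m}s^{K-3}=3\Theta(m,K-2)-2\Theta(m,K-3)$, which is \eqref{eq:f.N.K}. The points needing attention are that the cases (i)--(iii) partition the span-$s$ configurations with no overlap, that the ``free'' placements are counted with the intended convention, and that the small-span cases $s=1$ and $s=2$ (no interior box) remain consistent with the formula; once the confinement statement of the first step is available, the rest is routine bookkeeping.
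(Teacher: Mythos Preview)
Your strategy---first reduce the pairwise distance condition to confinement in a block of $m:=\lceil N/3\rceil$ consecutive boxes, then stratify by the span $s$ of the minimal such block and split according to whether box $0$ is an endpoint or an interior box---is exactly the route the paper takes (with the cosmetic change $n=s-1$). Your explicit confinement argument, and in particular the use of $N\ge 3m-2$, is a welcome addition that the paper leaves implicit.

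The gap is in the enumeration of cases (i)--(iii), and it is precisely the point you flagged as needing attention. The $K-1$ balls to be placed are \emph{labelled} (the application in Theorem~2 divides by $N^{K-1}$), so in case (i) the number of ordered placements into the $s$-box block with the right-extreme box occupied is $s^{K-1}-(s-1)^{K-1}$, not $s^{K-2}$: the phrase ``one of the other $K-1$ balls lies in the right-extreme box and the remaining $K-2$ are free'' either pins a specific labelled ball to the extreme (undercounting) or silently chooses which ball plays that role (overcounting whenever two or more land there). The same slip recurs in (iii), where the correct per-interior-position count is $s^{K-1}-2(s-1)^{K-1}+(s-2)^{K-1}$ by inclusion--exclusion on the two extreme boxes. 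With these corrections $N_s=s^{K}-2(s-1)^{K}+(s-2)^{K}$ for $s\ge 2$ (and $N_1=1$), and the sum telescopes to $f(N,K)=m^{K}-(m-1)^{K}$, which already disagrees with \eqref{eq:f.N.K} at $(N,K)=(6,3)$: direct enumeration gives $7$ favourable placements, whereas \eqref{eq:f.N.K} gives $5$. The paper's own proof commits the identical miscount, so the defect lies in the stated formula rather than in your overall plan.
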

\begin{proof}
Starting from the $0$th box, we clockwise number the boxes by integers in ascending order, until the $(\lceil N/3 \rceil -1)$th box is reached. In addition, still from the $0$th box, we counter-clockwise label the boxes by decreasing integers, until the label $-(\lceil N/3 \rceil -1)$ is reached.

We first count the events in which the distance between two ends of the shortest arc of boxes containing all the $K$ balls is $n$ ($1\leq n \leq \lceil N/3 \rceil -1$). Given the $0$th box being one end, the other end could be either the $n$th box or the $(-n)$th box, the next $K-2$ balls can be randomly assigned to the boxes between the $0$th box and the other end box, resulting in totally $2(n+1)^{K-2}$ events.  If the $i$th box ($1\leq i <  n$) is one end, then the other end is determined to be the $(i-n)$th box, the other $K-3$ balls (note that the $0$th box already has one ball) can be assigned in $(n+1)^{K-3}$ ways, and thus the number of events with such two ends are $(n-1)(n+1)^{K-3}$. Totally the number of the events with two ends distanced by $n$ is $2(n+1)^{K-2}+(n-1)(n+1)^{K-3} = (3n+1)(n+1)^{K-3}$.

By summing the number over $1\leq n \leq \lceil N/3 \rceil -1 $ and adding one event in which all balls are in the $0$th box, we obtain the total number of events in which any two balls are distanced by less than $\lceil N/3 \rceil$
\begin{align}
f(N,K)&=1 + \sum_{n=1}^{ \lceil N/3 \rceil -1} (3n+1)(n+1)^{K-3}  \nonumber \\
& =3\Theta\left(\lceil \frac{N}{3} \rceil ,K-2\right)-2\Theta\left( \lceil \frac{N}{3} \rceil ,K-3\right), \nonumber
\end{align}
which proves this lemma.
\end{proof}

From the lemma above, we can prove the following theorem.
\begin{thm}
Given the $2\times 1$ BC network with $K$ users, let $P(N,K)$ be the probability that the
transmitter finds two users among the $K$ users to form a 2-user $2\times 1$
MISO BC channel which can achieve the optimal DoF $\frac{4}{3}$ by using
BIA. Then
\begin{equation}\label{eq:P.N.K}
P(N,K)\geq1-\frac{3\Theta(\lceil \frac{N}{3}\rceil ,K-2)}{N^{K-1}}.
\end{equation}
\end{thm}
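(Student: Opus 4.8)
The plan is to turn the statement into a pure counting problem and then plug in Lemma~\ref{lm1}. First I would record the consequence of Theorem~1 in the multi-user setting: for any two users $\mathbb{R}_i$ and $\mathbb{R}_j$, set $d_{ij}\triangleq (n_{\delta,j}-n_{\delta,i})\bmod N$ and $\tau_{ij}\triangleq\min\{d_{ij},N-d_{ij}\}$; Theorem~1 says that $(\mathbb{R}_i,\mathbb{R}_j)$ forms a $2$-user $2\times 1$ MISO BC channel on which BIA attains the optimal $\tfrac{4}{3}$ DoF as soon as $\lceil N/3\rceil\le\tau_{ij}\le\lfloor N/2\rfloor$. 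Since $\tau_{ij}$ is a distance on a ring of $N$ positions, $\tau_{ij}\le\lfloor N/2\rfloor$ holds unconditionally, so the only binding requirement is $\tau_{ij}\ge\lceil N/3\rceil$. Hence the transmitter succeeds in finding a suitable pair among the $K$ users \emph{unless} every pair satisfies $\tau_{ij}<\lceil N/3\rceil$.

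Next I would fix the probability model and identify the ``bad'' event. With $n_{\delta,1}=0$ and $n_{\delta,2},\dots,n_{\delta,K}$ independent and uniform over $\{0,1,\dots,N-1\}$, the $K$ offsets are exactly $K$ balls placed on a ring of $N$ boxes with one ball pinned at box $0$, and the $N^{K-1}$ resulting configurations are equiprobable. By the previous paragraph, the configurations for which no admissible pair exists are precisely those in which the distance of every pair of balls is strictly less than $\lceil N/3\rceil$ --- and this is verbatim the event whose cardinality $f(N,K)$ is computed in Lemma~\ref{lm1}. Therefore $P(N,K)=1-f(N,K)/N^{K-1}$ exactly.

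The last step is to weaken this identity into the stated inequality. Using the closed form of Lemma~\ref{lm1}, $f(N,K)=3\Theta(\lceil N/3\rceil,K-2)-2\Theta(\lceil N/3\rceil,K-3)$ with $\Theta(a,b)=\sum_{i=1}^a i^b\ge 0$, so discarding the nonnegative subtrahend gives $f(N,K)\le 3\Theta(\lceil N/3\rceil,K-2)$ and hence
\[
P(N,K)=1-\frac{f(N,K)}{N^{K-1}}\ \ge\ 1-\frac{3\Theta(\lceil N/3\rceil,K-2)}{N^{K-1}},
\]
which is the claim.

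I do not expect any serious obstacle; the only point that needs a careful word is the reduction in the first paragraph --- specifically, that ``the transmitter cannot form an admissible pair'' is \emph{exactly} the event ``all pairwise ring distances are below $\lceil N/3\rceil$'', which uses that the upper constraint $\tau_{ij}\le\lfloor N/2\rfloor$ of Theorem~1 is automatically satisfied and that the informal ``distance between two balls'' of Lemma~\ref{lm1} coincides with $\tau_{ij}$. A cosmetic caveat: for $K\in\{2,3\}$ the exponent $K-3$ in $\Theta(\lceil N/3\rceil,K-3)$ is non-standard, but that term is dropped in the inequality, so it is harmless; the intended regime is $K\ge 3$ (indeed $K\ge 4$ for the $95\%$ statement advertised in the Introduction).
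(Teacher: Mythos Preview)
Your proposal is correct and follows essentially the same route as the paper: define the pairwise ring distance $\tau_{ij}$, identify the failure event with the ball-in-box event of Lemma~\ref{lm1}, obtain $P(N,K)=1-f(N,K)/N^{K-1}$, and then drop the nonnegative $2\Theta(\lceil N/3\rceil,K-3)$ term to get \eqref{eq:P.N.K}. Your explicit remark that the upper bound $\tau_{ij}\le\lfloor N/2\rfloor$ in Theorem~1 is automatic is a useful clarification the paper leaves implicit.
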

\begin{proof}
Let
\begin{equation}
\tau_{ij} = \min \{|n_{\delta,i} - n_{\delta,j}|, N-|n_{\delta,i} - n_{\delta,j}|\} \nonumber
\end{equation}
be the relative time offset between the user $i$ and the user $j$. Distributing $n_{\delta,j}$  ($2\leq j\leq K$) over $0\leq n_{\delta,j}\leq N-1$ uniformly is equivalent to casting $K-1$ balls uniformly into the ring of $N$ boxes. Furthermore, the event of $\tau_{ij}\leq \lceil N/3\rceil -1$ for all $i\neq j$ is equivalent to the event in which any two balls among $K$ balls in Lemma \ref{lm1} are distanced less than $\lceil \frac{N}{3} \rceil$. Therefore, the probability that $\tau_{ij}< \lceil N/3\rceil $ for all $i\neq j$ is $f(N,K)/N^{K-1}$. So, the probability that there exists $\tau_{ij}\geq \lceil N/3\rceil $ for some $i\neq j$, that is, the transmitter and the users $i$ and $j$ form a MISO BC channel achieving $\frac{4}{3}$ DoF by BIA, is
$P(N,K)=1-f(N,K)/N^{K-1}$. Finally, substituting \eqref{eq:f.N.K} into $P(N,K)$ proves the inequality \eqref{eq:P.N.K}.
\end{proof}

\section{Discussion and Conclusion}
\begin{figure}
\begin{centering}
\includegraphics[scale=0.7]{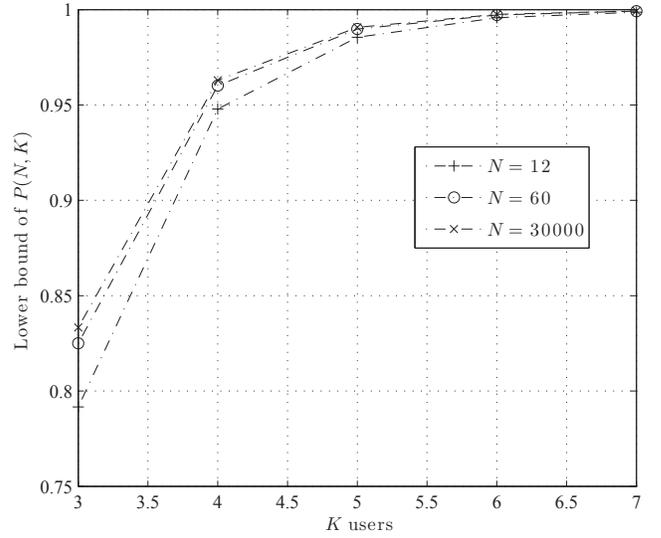}
\par\end{centering}
\caption{Lower bound of $P(N,K)$ as a function of user number $K$.}
\label{fig:lb.PNK}
\end{figure}
The lower bound of $P(N,K)$ in \eqref{eq:P.N.K} is plotted in Fig.~\ref{fig:lb.PNK} as a function of $K$ for $N=12$, $30$, $30000$. It is observed
the probability of finding 2 users among $K=4$ users to facilitate BIA
is $P(N,4)\geq 0.95$ for all $N\geq 12$. The chance of finding such two users among $K\geq 6$ users is almost one. It is also notable that the lower bound is not sensitive to $N$ when $N\geq 12$. Therefore, we conclude that regardless of the coherent time experienced by $K$ homogeneous users, if $K\geq 4$, it is almost guaranteed that the two-antenna transmitter can find two users among the $K$ users to form a BIA-favored 2-user $2\times 1$ MISO BC channel.

In summary, this paper shows that in a practical $2$-user $2\times 1$ MISO BC channel with homogeneous block fading, interference alignment can achieve the optimal $\frac{4}{3}$ DoF without the need of CSIT. Under the assumption that the coherent blocks of $K\geq 2$ homogeneous single-antenna users are randomly offset with uniform distribution, it is almost guaranteed that when $K\geq 4$, the transmitter can find two users to form a $2$-user $2\times 1$ BC channel, and achieve the optimal $\frac{4}{3}$ DoF by BIA. The results in this paper are also applicable to the $2\times 2$ X channel since the $2$-user $2\times 1$ MISO BC channel has the same system model as the $2\times 2$ X channel, in which two single-antenna transmitters broadcast to two single-antenna receivers.

\end{document}